\newcommand{\Fig}[1]{Fig.~\ref{#1}}
\newcommand{\nil}{\ensuremath{\mathsf{nil}}}
\newcommand{\none}{\ensuremath{\mathsf{none}}}
\newcommand{\true}{\ensuremath{\mathsf{true}}}
\newcommand{\false}{\ensuremath{\mathsf{false}}}
\newcommand{\solve}{\textsc{Solve}}
\newcommand{\assertWellFormed}{\textsc{AssertWellFormed}}
\newcommand{\solveTupleTuple}{\textsc{SolveTupleTuple}}
\newcommand{\solveNilRecord}{\textsc{SolveNilRecord}}
\newcommand{\solveRecordRecord}{\textsc{SolveRecordRecord}}
\newcommand{\solveChoiceChoice}{\textsc{SolveChoiceChoice}}
\newcommand{\solveSwitch}{\textsc{SolveSwitch}}
\newcommand{\cspkpn}{\textsc{CSP-KPN}}
\newcommand{\upvar}[1]{{\uparrow}#1}
\newcommand{\downvar}[1]{{\downarrow}#1}
\newcommand{\var}[1]{{\talloblong}#1}
\newcommand{\utvar}{\ensuremath{\mathcal{V}^{\uparrow}}}
\newcommand{\dtvar}{\ensuremath{\mathcal{V}^{\downarrow}}}
\newcommand{\bvar}{\ensuremath{\mathcal{F}}}
\newcommand{\utvars}{\ensuremath{\mathcal{V}^{\uparrow}}}
\newcommand{\dtvars}{\ensuremath{\mathcal{V}^{\downarrow}}}
\newcommand{\bvars}{\ensuremath{\mathcal{{F}}}}
\newcommand{\rel}{\sqsubseteq}
\newcommand{\constr}[2]{\ensuremath{#1 \rel #2}}
\newcommand{\ssat}{\ensuremath{\mathsf{B}}}
\newcommand{\ver}{\ensuremath{\mathsf{V}}}
\newcommand{\ed}{\ensuremath{\mathsf{E}}}
\newcommand{\con}[1]{\ensuremath{\mathcal{C} (#1)}}
\newcommand{\lab}{\ensuremath{\mathsf{L}}}
\newcommand{\term}{\ensuremath{\mathsf{Term}}}
\newcommand{\g}{\ensuremath{\mathsf{G}}}
\newcommand{\unsat}{\ensuremath{\mathsf{Unsat}}}
\newcommand{\el}[3]{%
    \ifx$#1$
        #2{:}~#3
    \else%
        {#1}\ifx$#2$ \else(#2)\fi{:}~#3
    \fi
}
\newcounter{listcount}\newcounter{totalcount}%
\newcommand{\tuple}[1]{%
  \setcounter{totalcount}{0}
  \renewcommand*{\do}[1]{\stepcounter{totalcount}}
  \docsvlist{#1}
  \setcounter{listcount}{0}
  \renewcommand*{\do}[1]{
    \stepcounter{listcount}
    ##1\ifnum\value{listcount}<\value{totalcount}\,\fi
  }
  \ensuremath{\left(\docsvlist{#1}\right)}
}
\newcommand{\tlist}[2]{%
  \setcounter{totalcount}{0}
  \renewcommand*{\do}[1]{\stepcounter{totalcount}}
  \docsvlist{#1}
  \setcounter{listcount}{0}
  \renewcommand*{\do}[1]{
    \stepcounter{listcount}
    ##1\ifnum\value{listcount}<\value{totalcount},\else\fi
  }
  \ensuremath{\left[\docsvlist{#1}\ifx$#2$\else\,| #2\fi\right]}
}
\newcommand{\record}[2]{%
  \setcounter{totalcount}{0}
  \renewcommand*{\do}[1]{\stepcounter{totalcount}}
  \docsvlist{#1}
  \setcounter{listcount}{0}
  \renewcommand*{\do}[1]{
    \stepcounter{listcount}
    ##1\ifnum\value{listcount}<\value{totalcount}, \else\fi
  }
  \ensuremath{\left\{\docsvlist{#1}\ifx$#2$\else\,| #2\fi\right\}}
}
\newcommand{\choice}[2]{%
  \setcounter{totalcount}{0}
  \renewcommand*{\do}[1]{\stepcounter{totalcount}}
  \docsvlist{#1}
  \setcounter{listcount}{0}
  \renewcommand*{\do}[1]{
    \stepcounter{listcount}
    ##1\ifnum\value{listcount}<\value{totalcount},\else\fi
  }
  \ensuremath{{(}{:}\docsvlist{#1}\ifx$#2$\else\,| #2\fi{:}{)}}
}
\newcommand{\switch}[1]{%
  \setcounter{totalcount}{0}
  \renewcommand*{\do}[1]{\stepcounter{totalcount}}
  \docsvlist{#1}
  \setcounter{listcount}{0}
  \renewcommand*{\do}[1]{
    \stepcounter{listcount}
    ##1\ifnum\value{listcount}<\value{totalcount},\else\fi
  }
  \ensuremath{\left<\docsvlist{#1}\right>}
}
\begin{document}

\title{Interface Reconciliation in Kahn Process Networks using CSP and SAT}

\author{Pavel Zaichenkov, Olga Tveretina, Alex Shafarenko} \institute{Compiler
Technology and Computer Architecture Group,\\University of Hertfordshire,
United Kingdom\\ \email{\{p.zaichenkov,o.tveretina,a.shafarenko\}@ctca.eu}}

\maketitle

\begin{abstract}
We present a new CSP- and SAT-based approach for coordinating interfaces of
distributed stream-connected components provided as closed-source services.
The Kahn Process Network (KPN) is taken as a formal model of computation and
a Message Definition Language (MDL) is introduced to describe the format of
messages communicated between the processes.  MDL links input and output
interfaces of a node to support flow inheritance and contextualisation.  Since
interfaces can also be linked by the existence of a data channel between them,
the match is generally not only partial but also substantially nonlocal.  The
KPN communication graph thus becomes a graph of interlocked constraints to be
satisfied by specific instances of the variables.  We present an algorithm that
solves the CSP by iterative approximation while generating an adjunct Boolean
SAT problem on the way.  We developed a solver in \verb|OCaml| as well as tools
that analyse the source code of KPN vertices to derive MDL terms and
automatically modify the code by propagating type definitions back to the
vertices after the CSP has been solved.  Techniques and approaches are
illustrated on a KPN implementing an image processing algorithm as a running
example.
\end{abstract}

\keywords\
coordination programming, component programming, Kahn
Process Networks, interface coordination, constraint satisfaction,
satisfiability

\section{Introduction}
\label{sec:introduction}

The software intensive systems have reached unprecedented scale by every
measure: number of lines of code; number of people involved in the development;
number of dependencies between software components, and amount of data stored
and manipulated~\cite{uls}.  Many of them include heterogeneous elements, which
come from variety of different sources: parts of them are written in different
languages and tuned for different hardware/software platforms.  Furthermore,
when the software is developed and modified by dispersed teams, inconsistencies
in the design, implementation and usage are unavoidable.  This leads to clashes
of assumptions about operation cost, resource availability and algorithm
processing rate.  Last but not least, parts of the system are constantly
changing.  Many elements need to be replaced without negative effects on
performance or behaviour of the rest of the system.

One way to attack the software challenge is to suggest a component-based
design: a program is designed as a set of components, each represented by an
interface that specifies how they can be used in an application, and one or
more implementations which define their actual behaviour.  When a designer of
the application uses a component, they agree to rely only on the interface
specification.  Similarly, a developer who creates an implementation for
a component is unaware of the context where the component will be used.  An
algorithm that specifies the behaviour depends solely on self-contained input
and its result is produced in the form of a message without a specific
destination address.

The process network, introduced by G.~Kahn (KPNs)~\cite{Kahn1974}, is
a collection of stream-connected algorithmic building blocks, which are fully
independent single-threaded processes that lack a global state. The execution
of the network generally requires a supervisory coordination program that
manages the progress of the blocks and which provides a message-communication
infrastructure for the streams.  Since all domain-specific computations are
performed by the sequential processes inside the blocks, programming is
naturally separated into algorithm and concurrency
engineering~\cite{Jesshope-Shafarenko-Concurrency-Engineering}.  The
coordination language is responsible for component orchestration, namely
\begin{inparaenum}[1)]
    \item dynamic load control and adaptivity for a changing environment;
    \item access control to shared resources; and
    \item communication safety between components.
\end{inparaenum}
This paper focuses on the last aspect.  Component-based design requires an
implementation of a single component to be independent from the rest of the
network.  It raises a number of software engineering issues: components'
interfaces are required to be specific enough so that components are aware of
data structures communicated between them and, at the same time, generic enough
to facilitate decontextualisation and software reuse.

In this paper we present a solution to the interface reconciliation problem for
an interface definition language specifically designed for KPNs.  We
demonstrate a static mechanism (based on solving Constraint Satisfaction
Problem (CSP) and SAT) that checks compatibility of component interfaces
connected in a network with support of overloading and structural subtyping.
This allows one to design completely decontextualised components, so that they
may be reused in different contexts without changing the code.  This is
especially important when the components are provided as a compiled library and
its source is either private or unavailable.  The components are compatible
with a potentially unlimited number of input/output data formats coming from
the environment.  We also introduce a {\em flow
inheritance\/}~\cite{GrelSchoShafPPL08} mechanism: put simply, a message sent
from one component to another may also be required to contain additional data
which, although not needed by the recipient itself, is nevertheless required by
a component that the recipient sends its own messages to
(\Fig{fig:flow-inheritance}).

We propose a Message Definition Language (MDL) that enables components' generic
interfaces as well as subtyping and flow inheritance; we then recast the
interface reconciliation problem as a CSP for the interface variables and
propose an original solution algorithm that solves it by iterative
approximation while generating an adjunct Boolean SAT problem on the way.

We designed and implemented a communication protocol\footnote{for the avoidance
of doubt we state that the term ``protocol'' is used here in the sense of
`convention governing the structure and interpretation of messages' and not in
any state-transition sense} for components coded in \verb|C++| to demonstrate
the capabilities of MDL\@.  We developed tools that
\begin{inparaenum}[1)]
    \item automatically derive MDL interfaces from the source code;
    \item generate a set of constraints given a netlist\footnote{a textual
        representation of a graph} that describes the topology of the network;
    \item solve the CSP\@; and
    \item based on the solution of the CSP generate compilable code for
        every component with some API provided for run-time support.
\end{inparaenum}

The process is similar to template specialisation in \verb|C++|, however, in
our case, constraints that are produced by a pair of vertices may affect the
whole network, and, consequently, a global constraint satisfaction procedure is
required.  In this paper we provide a formal description of MDL, the CSP
definition and the algorithm designed to solve the CSP\@.

Throughout the paper we demonstrate the utility of the proposed approach on
a practical example: an image segmentation algorithm based on k-means
clustering (\Fig{fig:example}).

{\bf Related work.} Linda~\cite{Ahuja} is the first language designed to
separate computation and coordination models.  It is based on a simple
tuplespace model. One of the disadvantages of the model is that the knowledge
about the communication protocol is required while implementing the processes.
The problem of separation of concerns has not been solved in Concurrent
Collections from Intel~\cite{budimlic2010} (Linda's successor). Therefore,
generic components, which may be reused in multiple contexts without being
modified, are not supported in the language.

In the programming language Reo~\cite{Arbab} components are communicating
through hierarchical connectors that coordinate their activities and manipulate
message dataflow.  Similar to our approach, a constraint satisfaction engine,
which finds a solution that specifies a valid interaction between components,
is implemented.  S.~Kemper describes a SAT-based verification of Timed
Constraint Automata that is used for coordination of communicating
components~\cite{Kemper} as well as in Reo.  However, the research mostly
focuses on the design of reusable interaction protocols and lacks the
description of reusable component interfaces.

In previous years there were attempts to design efficient programming languages
and run-time systems for parallel programming based on
KPN~\cite{GrelSchoShafPPL08,nornir}, however, the interface reconciliation
problem stemming from nonlocal inheritance in KPNs has not been given enough
attention.

\section{Motivation}
\label{sec:motivation}

Kahn Process Networks is a concurrency model that introduces data streams in
the form of sequential channels that connect independent processes into
a network.  Decontextualisation of processes is an advantage of the  model.
Since processes do not share any data, a process's conformity with the context
is defined by its interface, which describes the kinds of message that the
process can send and receive.  Our goal is to provide a means of interface
coordination that supports genericity, i.e.\ the ability of an interface to
function correctly in a variety of contexts.

The distributed components are commonly provided as closed-source services.
Each service contains multiple processing functions compatible with a variety
of contexts. The input interface of a component is specialised based on the
message format that the message producer is capable to produce, and,
symmetrically, the output interface is specialised based on the consumer's
requirements.  For example, one can define a component that contains two
functions with type signatures \verb|Int -> Int| and \verb|Int -> String|.  The
functions implement algorithms that compute different values given the same
input.  The task is to statically choose the algorithm based the consumer's
requirements.  This also demonstrates that input and output types in the
interfaces of the KPN components are treated in the same manner.

The latter makes services fundamentally different from functions.  In
functional languages, a type signature that corresponds to the interface of the
component in the example can be defined by the intersection type \texttt{Int ->
Int $\land$ Int -> String}, which is unsound due to its ambiguity.  In
functional languages, the return type of a function depends solely on input
argument types. In contrast, the interfaces of the KPN components form
context-dependent relations that offer a selection of output types to
a consumer. This makes the typing decisions essentially nonlocal and genuinely
multiple.

The problem being solved can be seen as a type inference problem; however, it
cannot be solved using conventional type inference mechanisms based on
first-order unification due to the presence of polymorphic output types and
potential cyclic dependencies in the network (the example in~\Fig{fig:example}
contains a back edge).


\begin{figure}[t]
    \begin{center}
        \includegraphics[width=2.7in]{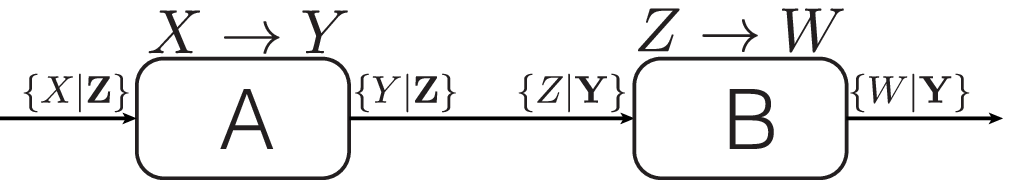}
    \end{center}
    \caption{Illustration of flow inheritance.  A component \texttt{A} can
        process a value of type $X$ and return a value of type $Y$ as a result.
        However, an input message contains not only an element of type $X$, but
        also an element of type $Z$.  The latter can be processed by
        a component \texttt{B}.  Flow inheritance provides a mechanism for
        partial message processing in a pipelined fashion.}
\label{fig:flow-inheritance}
\end{figure}

A common communication pattern in streaming networks is a pipeline, where
a message travels along a chain of components that work on its content.  The
component can accept a subtype of the input type, but part of the message may
be bypassed to another component down the pipeline if the message contains the
data the further component will need to use (\Fig{fig:flow-inheritance}).
Two modes of flow inheritance are envisaged, considered next.

{\bf Flow inheritance for records.} 
The fundamental type of a message in a variety of systems is {\em record},
which is a collection of label-value pairs.  Each component processes only
a specific set of pairs, however the pairs that the component does not require
may be bypassed to the output, so they can be processed in the next stages of
the pipeline if they are required.  For example, a message that represents
a geometric shape and has a type
\verb|{x:float, y:float, radius:float}|
may be processed in two steps: the first component processes the position of
the shape as defined by the pairs \verb|x| and \verb|y|, and the second one
needs the pair labelled \verb|radius|.

{\bf Flow inheritance for variants.} OOP extensively uses
overloading~\cite{strachey2000fundamental} to improve modularity and
reusability of code.  Similarly, support of polymorphic components in KPNs
facilitates their reuse in different contexts.  At the top level we see
a component's interface as a collection of alternative label- record pairs,
called {\em variants}, where the label corresponds to the particular
implementation that can process the message defined by the given record,
e.g.\\*[.5em]
\verb|(: cart: {x:float, y:float}, polar: {r:float, phi:float} :)|.\\*[.5em]
Here the colonised parentheses delimit the collection of variants and each
variant is associated with a record written as a set of label-value pairs.  Any
message that does not belong to one of the accepted variants must cause an
error unless there exists another component further down in the pipeline that
can process it.  In this case, the message should be bypassed to the recipient.

In streaming networks flow inheritance alleviates the problem and makes
configuration of individual interfaces independent from each other.  In our
work we developed a solution for the interface reconciliation based on the CSP
with support of flow inheritance for records and variants.  Our mechanism
statically detects implementation variants in components that are not required
in the context, which is important for applications running in the Cloud where
a user is charged proportionally to the amount of resources their application
uses.

\begin{figure}[t]
    \begin{center}
        \includegraphics[width=4in]{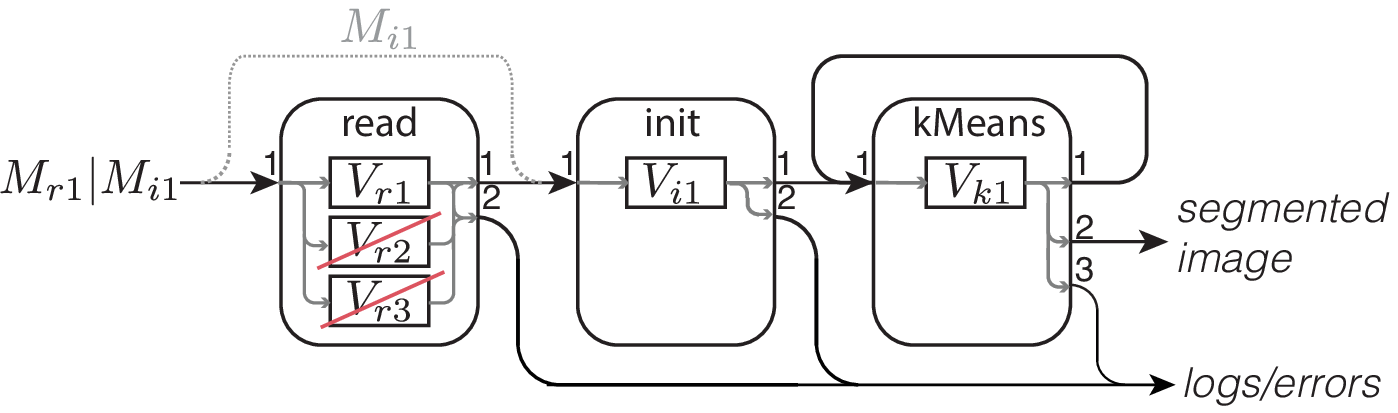}
    \end{center}
    \caption{An image segmentation algorithm based on k-means clustering that
    is implemented as a Kahn Process Network}
\label{fig:example}
\end{figure}

\label{sec:example}
{\bf Example.} As a running example we use our
implementation~\cite{repository-example} of an image segmentation algorithm
based on k-means clustering~\cite{macqueen1967}.  The applications's KPN graph
is shown in \Fig{fig:example}. The network represents a pipeline composed of
three components:
\begin{itemize}
    \item The component \verb|read| opens an image file using an input message
        $M_{r1}$ with the file name, and sends it to the first output channel.
        The component contains 3 functions that overload component's behaviour
        (i.e.\ the input interface of the component is defined by 3 variants):
        \begin{inparaenum}[1)]
            \item $V_{r1}$ loads the colour image in RGB format;
            \item $V_{r2}$ loads the greyscale image as an intensity one;
                and
            \item $V_{r3}$ loads the image as it is stored in the file.
        \end{inparaenum}
    \item The component \verb|init| sets initial parameters for the k-means
        algorithm.  The component contains one processing function $V_{i1}$.
        The input message can either come from the component \verb|read| or
        from the environment with an input message $M_{i1}$ if it has been
        opened and preprocessed before.  The input message must contain the
        number of clusters \verb|K| and the image itself.
    \item The \verb|kMeans| component represents an iterative implementation
        (defined as a function $V_{k1}$) of the k-means algorithm.  The result
        of each iteration is sent to the first output channel, which is
        circuited back to the input channel of the component itself.  This kind
        of design gives an opportunity to manage system load in the run-time
        and execute the next iteration only when sufficient resources are
        available.  Once the cluster centres have converged, the algorithm
        yields the result to the second output channel.
\end{itemize}

Using flow inheritance for variants $M_{i1}$ is routed directly to the
\verb|init| component bypassing a component \verb|read|.  Using flow
inheritance for records a parameter \verb|K| that is contained in $M_{r1}$ is
implicitly bypassed through \verb|read| to \verb|init|.

The interface reconciliation algorithm is capable of finding out that $V_{r2}$
and $V_{r3}$ are not used with the provided input, and functions containing the
implementations will be excluded from the generated code.

\section{Message Definition Language}
\label{sec:mdl}

Now  we define  the Message Definition Language (MDL) that describes component
interfaces.  Each component has its associated input and output interface
terms.  A {\em message\/} is a collection of data entities, defined by
a corresponding collection of {\em terms\/} that can contain term variables,
Boolean variables and Boolean expressions.

Each term is either atomic or a collection in its own right.  Atomic terms are
{\em symbols}, which are identifiers used to represent standard \verb|C++|
types, such as \verb|int| or \verb|string|.  To account for subtyping (including
the kinds that are not present in \verb|C++|) we include three categories of
collections (see \Fig{fig:semilattice}): {\em tuples\/} that demand exact match
and thus admit no structural subtyping, {\em records\/} that are subtyped
covariantly (a larger record is a subtype) and {\em choices\/} that are
contravariantly subtyped using set inclusion (a smaller choice is a subtype).
The intention of these terms is to represent
\begin{enumerate}
\item extensible data records~\cite{Gaster96apolymorphic,Leijen:scopedlabels},
    where additional named fields can be introduced without breaking the match
    between the producer and the consumer and where fields can also be
    inherited from input to output records by lowering the output type, which
    is always safe;
\item  data-record variants, where generally more variants can be accepted by
    the consumer than the producer is aware of, and where such additional
    variants can be inherited from the output back to the input of the producer
    --- hence contravariance --- again by raising the input type, which is
    always safe also.
\end{enumerate}

{\em Term variables\/} correspond to four categories of terms.  However, for
the correctness of the algorithm it is important to distinguish variables that
represent choices from variables that represent other term categories (due to
two kinds of subtyping defined by the seniority relation in
Definition~\ref{def:seniority}).  We use an {\em up-coerced\/} term variable,
e.g.~$\upvar{a}$, to represent a choice term and a  {\em down-coerced\/} term
variable, e.g.~$\downvar{a}$, to represent any other term, i.e.\ a symbol,
a tuple or a record.  Formally,

\vspace{-.5em}
\begin{grammar}
    <term variable> ::= $\uparrow$identifier | $\downarrow$identifier
\end{grammar}
\vspace{-.5em}

We  use symbol $\talloblong$ instead of $\uparrow$ or
$\downarrow$ symbols in the context where the sort is unimportant,
e.g.~$\var{a}$ is a term variable that can be either up-coerced or
down-coerced.

For brevity, term variables are called {\em variables}, Boolean variables are
called {\em flags\/} and Boolean expressions are called {\em guards}. The
following grammar specifies the guards:

\vspace{-.5em}
\begin{grammar}
    <bool> ::= (<bool> $\land$ <bool>) | (<bool> $\lor$ <bool>)
               | $\lnot$<bool> | "true" | "false" | flag
\end{grammar}
\vspace{-.5em}

MDL terms are built recursively using the constructors: tuple, record, choice
and switch, according to the following grammar:

\vspace{-.5em}
\begin{grammar}
    <term> ::=  <symbol> | <term variable> | <tuple> | <record> | <choice> | <switch>

    <tuple> ::= "("<term> $[$<term>$]^*$")"

    <record> ::= \hspace*{-1mm}"\{"$[$<label>"("<bool>"):"<term>$[$","<label>"("<bool>"):"<term>$]^* [$"|"$\downarrow$identifier $]]$"\}"

    <choice> ::= \hspace*{-1mm}"(:"$[$<label>"("<bool>"):"<term>$[$","<label>"("<bool>"):"<term>$]^*[$"|"$\uparrow$identifier $]]$":)"

    <label> ::= <symbol>

\end{grammar}
\vspace{-.5em}

Informally, a {\em tuple\/} is an ordered collection of terms and a {\em record\/} 
is an extensible, unordered collection of guarded labeled terms, where
{\em labels\/} are arbitrary symbols, which are unique within a single record.
A {\em choice\/} is a collection of alternative terms.  The syntax of choice is
the same as that of record except for the delimiters.  The difference between
records and choices is in subtyping and will become clear below when we define
seniority on terms.  We use choices to represent polymorphic messages
and component interfaces.

Records and choices are defined in {\em tail form}.  The tail is denoted by
a variable that represents a term of the same kind as the construct in which it
occurs.  For example, in the term
$\record{\el{l_1}{\true}{t_1},\dots,\el{l_n}{\true}{t_n}}{\downvar{v}}$ the
variable $\downvar{v}$ represents the tail of the record, i.e.~its members with
labels $l_i: l_i \neq l_1, \dots l_i \neq l_n$.  A {\em switch\/} is a set of
unlabeled (by contrast to a choice) guarded alternatives.

\vspace{-.5em}
\begin{grammar}
<switch> ::= "<"<bool>":"<term>$[$", "<bool>":"<term>$]^*$">"
\end{grammar}
\vspace{-.5em}

Exactly one guard must be \true{} for any valid switch.  The
switch is substitutionally equivalent to the term marked by the \true{} guard:

$$\switch{\el{}{\false}{t_1},
\dots,\el{}{\true}{t_i},\dots,\el{}{\false}{t_n}} = \switch{\el{}{\true}{t_i}}
= t_i\text{.}$$
The switch is an auxiliary construct intended for building conditional terms.  For example,
$\switch{\el{}{a}{int}, \el{}{\lnot a}{string}}$ represents the symbol $int$ if $a
= \true$, and the symbol $string$ otherwise.

For each term $t$, we use $\utvar(t)$ to denote the set of up-coerced term
variables that occur in $t$, $\dtvar(t)$ to denote the set of the down-coerced
ones, and $\bvar(t)$ to denote the set of flags.

A term $t$ is called {\em semi-ground\/} if it does not contain variables,
i.e.~$\utvar(t) \cup \dtvar(t) = \emptyset$.  A term $t$ is called {\em
ground\/} if it is semi-ground and does not contain flags, i.e.~$\utvar(t) \cup
\dtvar(t) \cup \bvar(t) = \emptyset$.

A term $t$ is {\em well-formed\/} if it is ground and one of the following
holds:
\vspace{-.7em}
\begin{enumerate}
    \item $t$ is a symbol.
    \item $n > 0$ and $t$ is a tuple \tuple{t_1,\dots,t_n} where all $t_i$, $0
        < i \leq n$, are well-formed.
    \item $n \geq 0$ and $t$ is a record
        \record{\el{l_1}{b_1}{t_1},\dots,\el{l_n}{b_n}{t_n}}{} where for all $0
        \leq i\ne j \leq n$, $b_i\wedge b_j \to l_i \neq l_j$ and all $t_i$ for
        which $b_i$ are \true{} are well-formed.
    \item $n \geq 0$ and  $t$ is a choice
        \choice{\el{l_1}{b_1}{t_1},\dots,\el{l_n}{b_n}{t_n}}{}  where for all
        $0 \leq i\ne j \leq n$, $b_i\wedge b_j \to l_i \neq l_j$ and all $t_i$
        for which $b_i$ are \true{} are well-formed.
    \item $n>0$ and $t$ is a switch $\switch{\el{}{b_1}{t_1},
        \dots,\el{}{b_n}{t_n}}$ where for some  $1\leq  i\leq n$, $b_i=\true$
        and $t_i$ is well-formed and where $b_j=\false$ for all $j \neq i$.
 \end{enumerate}

If an element of a record, choice or switch has a guard that is equal to
\false, then the element can be omitted, e.g.
$$\record{\el{l_1}{b_1}{t_1}, \el{l_2}{\false}{t_2},\el{l_3}{b_n}{t_3}}{}
= \record{\el{l_1}{b_1}{t_1}, \el{l_3}{b_3}{t_3}}{}\text{.}$$
If an element of a record or a choice has a guard that is \true, the guard can
be syntactically omitted, e.g.
$$\record{\el{l_1}{b_1}{t_1}, \el{l_2}{\true}{t_2},\el{l_3}{b_n}{t_3}}{}
= \record{\el{l_1}{b_1}{t_1}, \el{l_2}{}{t_2},\el{l_3}{b_n}{t_3}}{}\text{.}$$
We define the {\em canonical form\/} of a well-formed collection as
a representation that does not include \false{} guards, and we omit \true{}
guards anyway.  The canonical form of a switch is its (only) term with a
\true{} guard, hence any term in canonical form is switch-free.

Next we introduce a seniority relation on terms for the purpose of structural
subtyping.  In the sequel we use $\nil$ to denote the empty record
$\{\enskip\}$, which has the meaning of \texttt{void} type in \verb|C++| and
represents a message without any data.  Similarly, we use $\none$ to denote the
empty choice $(:\enskip:)$.

\begin{definition}[Seniority relation]
\label{def:seniority}
    The seniority relation $\rel$ on well-formed terms is defined in canonical
    form as follows:
    \begin{enumerate}
        \item $\none \rel t$ if $t$ is a choice.
        \item $t \rel \nil$ if $t$ is any term but a choice.
        \item $t \rel t$.
        \item $t_1 \rel t_2$, if for some $k,m>0$ one of the following holds:
        \begin{enumerate}
            \item $t_1 = \tuple{t^1_1,\dots,t^k_1}$, $t_2
                = \tuple{t^1_2,\dots,t^k_2}$ and $t^i_1 \rel t^i_2$ for each
                $1\leq i \leq k$;

            \item $t_1
                = \record{\el{l^1_1}{}{t^1_1},\dots,\el{l^k_1}{}{t^k_1}}{}$ and
                $t_2
                = \record{\el{l^1_2}{}{t^1_2},\dots,\el{l^m_2}{}{t^m_2}}{}$,
                where $k \geq m$ and for each $j \leq m$ there is $ i \leq k$
                such that $l^i_1 = l^j_2$ and $t^i_1 \rel t^j_2$;

            \item $t_1
                = \choice{\el{l^1_1}{}{t^1_1},\dots,\el{l^k_1}{}{t^k_1}}{}$ and
                $t_2
                = \choice{\el{l^1_2}{}{t^1_2},\dots,\el{l^m_2}{}{t^m_2}}{}$,
                where $k \leq m$ and for each $i \leq k$ there is $ j \leq m$
                such that $l^i_1 = l^j_2$ and $t^i_1 \rel t^j_2$;
        \end{enumerate}
    \end{enumerate}
\end{definition}

Given the relation $t \rel t'$, we say that $t'$ is {\em senior\/} to $t$ and
$t$ is {\em junior\/} to $t'$.

\begin{figure}[t]
    \begin{center}
        \begin{tikzpicture}[node distance=1.2cm,line width=.1mm]
    \node (nil) at (0,0) {\nil};
    \node [below of=nil] (tuple) {tuple};

    \node [right of=tuple] (record) {record};
    \node [below of=record] (recordEnd) {$\dots$};

    \node [left of=tuple] (symbol) {symbol};

    \node [right of=record] (choice) {choice};
    \node [above of=choice] (choiceEnd) {$\dots$};
    \node [below of=choice] (none) {\none};

    \node (dummy1) at (-2, 0) {};
    \node [below] (dummy2) at (-2, -2.2) {{\em subtype}};

    \draw (nil) -- (symbol);
    \draw (nil) -- (tuple);
    \draw (nil) -- (record);
    \draw (choice) -- (none);

    \draw (record) -- (recordEnd);
    \draw (choice) -- (choiceEnd);

    \draw (choice) -- (choiceEnd);

    \draw[ultra thin,->] (dummy1) -- (dummy2);

\end{tikzpicture}
    \end{center}
    \caption{Two semilattices representing the seniority relation for terms of
    different categories.  The lower terms are the subtypes of the upper ones.}
\label{fig:semilattice}
\end{figure}

\begin{proposition}
\label{cor:semilattice}
    The seniority relation $\rel$ is trivially a partial order, and $(T, \rel)$
    is a pair of upper and lower semilattices (\Fig{fig:semilattice}).
\end{proposition}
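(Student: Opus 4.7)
The plan is to handle the two assertions separately: the partial-order part is, as the proposition's ``trivially'' suggests, almost immediate from the shape of Definition~\ref{def:seniority}, while the semilattice structure requires an explicit construction of joins and meets.

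For the order-theoretic part, reflexivity is clause~3 of the definition outright. Antisymmetry and transitivity follow by joint structural induction on the sizes of the participating terms. The pivotal preliminary observation is a category lemma: whenever $t_1 \rel t_2$ and neither is $\nil$ nor $\none$, the two terms share the same top-level constructor (and, for tuples, the same arity), because clauses~4(a)--(c) are the only ones that can apply. Antisymmetry then closes by induction: in the tuple case componentwise; in the record case the cardinality inequalities $k\geq m$ and $m\geq k$ of clause~4(b), applied in both directions, force identical label sets so the induction hypothesis handles the matched values; the choice case is dual. Transitivity uses the same category split and applies the inductive hypothesis to the matched components.

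For the semilattice structure I would follow the decomposition suggested by \Fig{fig:semilattice}: the non-choice terms together with $\nil$ form a join-semilattice with maximum $\nil$, and the choices together with $\none$ form a meet-semilattice with minimum $\none$. I would build the join of two non-choice terms $t_1,t_2$ explicitly: if their top-level categories differ, or if they are distinct symbols, or tuples of differing arity, the join is $\nil$ (by the category lemma no smaller candidate can dominate both); otherwise it is obtained componentwise for tuples, and for records by retaining exactly the labels common to $t_1$ and $t_2$ and mapping each such label to the recursive join of the associated values. The meet of two choices is constructed dually: take the union of the labels, recurse on values at shared labels, and default to $\none$ at a shared label where the recursion produces no common subterm.

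The main obstacle is not the well-formedness of these candidates but the verification of the universal property. Concretely, for the join one must show that any common senior $t$ of $t_1,t_2$ is senior to the constructed join. The key point, once again using the category lemma, is that a record senior to both $t_1$ and $t_2$ can only mention labels that occur in both, and its value at each such label must be senior to both corresponding components; the inductive hypothesis then yields seniority with respect to the componentwise join, and clause~4(b) reassembles the conclusion. The tuple case is entirely componentwise and the choice case is handled symmetrically, with $\nil$ and $\none$ serving as the absorbing extrema for their respective semilattices.
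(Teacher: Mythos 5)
The paper states Proposition~\ref{cor:semilattice} without giving any proof, so your attempt has to stand on its own. The partial-order half is sound and is indeed the ``trivial'' part: reflexivity is clause~3, your category lemma is correct (once $\nil$ and $\none$ are set aside, only clauses~4(a)--(c) can relate two terms, and each preserves the constructor), and the structural induction for antisymmetry and transitivity closes as you describe --- for records the two cardinality bounds $k\geq m$ and $m\geq k$ together with label uniqueness in well-formed terms force equal label sets, and choices are dual.

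The semilattice half, however, contains a genuine error in the direction of the choice ordering. By clause~4(c) the \emph{junior} choice is the one with \emph{fewer} labels, so any common junior of two choices carries a subset of the \emph{intersection} of their label sets; your meet, built on the \emph{union} of the labels, is not even a lower bound of the pair --- what you have described is a candidate least common \emph{senior}, which is exactly the object that need not exist for choices (this is why they form only a lower semilattice). The correct meet keeps a shared label only when the two values admit a common junior and otherwise \emph{omits} the label; your fallback to $\none$ at such a label is unsound unless both values are themselves choices, since $\none \rel t$ holds only when $t$ is a choice. A symmetric repair is needed on the join side: record and tuple components may themselves be choices, and two such components can lack any common senior (e.g.\ a one-variant choice against a symbol), so the record join must drop that common label and the tuple join must collapse all the way to $\nil$ rather than being formed componentwise. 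With these corrections --- intersection of labels for the choice meet, label-dropping instead of $\none$-padding, and $\nil$ as the fallback when component joins fail --- your universal-property argument goes through.
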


The seniority relation represents the subtyping relation on terms.  If
a term $t'$ describes the input interface of a component, then the component
can process any message described by a term $t$, such that $t \rel t'$.

Although the seniority relation is straightforwardly defined for ground terms,
terms that are present in the interfaces of components can contain variables
and flags.  Finding such ground term values for the variables and such Boolean
values for the flags that the seniority relation holds represents a CSP
problem, which is formally introduced next.

\section{Constraint Satisfaction Problem for KPN}
\label{sec:problem-def}

In this section we define a Constraint Satisfaction Problem for Kahn Process
Networks (\cspkpn{}).  We regard a KPN network as a directed weakly connected
labeled graph $\g=(\ver, \ed,\lab)$, where

\begin{enumerate}
    \item $\ver$ is a set of vertices.  The vertices correspond to individual
        Kahn processes.

    \item $\ed$ is a set of edges, where each edge $e \in \ed$ is an ordered
        pair of vertices $(v,v')$, $v,v' \in \ver$.  The edges correspond to
        channels connecting Kahn processes.

    \item A function $\lab\colon \ed \rightarrow \term \times \term$ assigns
        a label\footnote{we use the same word ``label'' to refer to the mark on
        a graph edge and the symbol that labels a term in a record or a choice;
        however our intention is always clear from the context.} to each edge
        $e \in \ed$ which represents a pair of MDL terms
        $\lab(e)=\constr{t}{t'}$ called a {\it constraint\/}\footnote{in the rest
        of the paper symbol $\rel$ denotes the seniority relation for a pair
        ground terms; alternatively, if the terms are not ground, $\rel$
        specifies a constraint.}.  It defines the input requirements and the
        output properties associated with the channel.
\end{enumerate}

Given a graph $\g=(\ver,\ed,\lab)$ we define the set of constraints as
$$\con{\g}=\bigcup_{e\in \ed} \lab(e)\text{,}$$
the sets of up-coerced term variables $\utvars(\g)$ and down-coerced term
variables as $\dtvars(\g)$
$$
\utvars(\g) = \bigcup_{\constr{t}{t'} \in \con{\g}}\utvar(t) \cup \utvar(t')
    \quad\text{and}\quad
\dtvars(\g) = \bigcup_{\constr{t}{t'} \in \con{\g}}\dtvar(t) \cup \dtvar(t')\text{,}
$$
and the set of flags $\bvars(\g)$ as
$$\bvars(\g) = \bigcup_{\constr{t}{t'} \in \con{\g}}\bvar(t) \cup \bvar(t')\text{.}$$
Assume a vector of flags $\vec{f}=(f_1,\dots,f_l)$, a vector of term variables
$\talloblong{\vec{v}}=(\talloblong{v_1},\dots,\talloblong{v_m})$, a vector of
Boolean values $\vec{b}=(b_1,\dots,b_l)$ and a vector of terms
$\vec{s}=(s_1,\dots, s_m)$.
Then for each term $t$
\begin{enumerate}
    \setlength\itemsep{.1em}
    \item $t[\vec{f}/\vec{b}]$ denotes the vector obtained as a result of the
        simultaneous replacement of $f_i$ with $b_i$ for each $1\leq j \leq l$;
    \item $t[\talloblong{\vec{v}}/\vec{s}]$ denotes the vector obtained as
        a result of the simultaneous replacement of $\talloblong{v}_i$ with
        $s_i$ for each $1\leq i\leq m$;
    \item $t[\vec{f}/\vec{b},\talloblong{\vec{v}}/\vec{s}]$ is a shortcut for
        $t[\vec{f}/\vec{b}][\talloblong{\vec{v}}/\vec{s}]$.
\end{enumerate}

Assume a KPN graph $\g=(\ver, \ed, \lab)$ such that $|\bvars(\g)|=l$,
$|\utvars(\g)|=m$, $|\dtvars(\g)|=n$ and for some $l,m,n \geq 0$.
\begin{definition}[\cspkpn{}]
    We define a CSP for a KPN graph $\g$ (CSP-KPN) as follows: for each
    $\constr{t}{t'} \in \con{\g}$ find a vector of Boolean values $\vec{b}
    = (b_1,\dots,b_l)$, a vectors of ground terms $\vec{t}=(t_1,\dots,t_m)$,
    $\vec{t'}=(t'_1,\dots,t'_n)$ such that
    $$t[\vec{f}/\vec{b},\upvar{\vec{v}}/\vec{t},\downvar{\vec{v}}/\vec{t}'] \rel t'[\vec{f}/\vec{b},\upvar{\vec{v}}/\vec{t},\downvar{\vec{v}}/\vec{t}']\text{,}$$
    where $\vec{f}=(f_1,\dots,f_l)$,
    $\upvar{\vec{v}}=(\upvar{v_1},\dots,\upvar{v_m})$,
    $\downvar{\vec{v}}=(\downvar{v_1},\dots,\downvar{v_n})$.  The tuple
    $(\vec{b}, \vec{t}, \vec{t}')$ is called a solution.
\end{definition}

A \cspkpn{} is decidable since the message definition language we introduced
can be seen as a term algebra, and decision problems for term algebras are
decidable~\cite{book-cclt}.

\section{Adjunct SAT}
\label{sec:adjunct-sat}

The \cspkpn{} solution algorithm presented in the next section is iterative and
takes advantage of the order-theoretical structure of the MDL
(Proposition~\ref{cor:semilattice}).

Let $\ssat_0\subseteq \ssat_1\subseteq \dots \subseteq \ssat_s$ be sets of
Boolean constraints, and  $\vec{a}^\uparrow$ and $\vec{a}^\downarrow$ be
vectors of semiground terms such that $|\vec{a}^\uparrow| = |\utvars(\g)|$ and
$|\vec{a}^\downarrow| = |\dtvars(\g)|$.  The vectors $\vec{a}^\uparrow$ and
$\vec{a}^\downarrow$ are {\em conditional approximations\/} of the solution.

We seek the solution as a fixed point of a series of approximations in the
following form:
$$
    (\ssat_0, \vec{a}^\uparrow_0, \vec{a}^\downarrow_0), \ldots,
    (\ssat_{s-1}, \vec{a}^\uparrow_{s-1}, \vec{a}^\downarrow_{s-1}),
    (\ssat_s, \vec{a}^\uparrow_s, \vec{a}^\downarrow_s)\text{,}
$$
where for every $1\leq k \le s$ and a vector of Boolean values $\vec{b}$ that
is a solution to $\mathsf{SAT}(\ssat_k)$ (by $\mathsf{SAT}(\ssat_k)$ we mean
a set of Boolean vector satisfying $\ssat_k$):

\begin{equation}
    \label{eq:propterms}
    \vec{a}^\uparrow_{k-1}[\vec{f}/\vec{b}] \rel \vec{a}^\uparrow_{k}[\vec{f}/\vec{b}]
    \qquad\text{and}\qquad
    \vec{a}^\downarrow_{k}[\vec{f}/\vec{b}] \rel \vec{a}^\downarrow_{k-1}[\vec{f}/\vec{b}]\text{,}
\end{equation}
where the elements of the vectors are compared pairwise.  The starting point is
$\ssat_0=\emptyset$, $\vec{a}^\uparrow_0=(\none,\dots,\none)$,
$\vec{a}^\downarrow_0=(\nil,\dots,\nil)$ and the series terminates as soon as
$\mathsf{SAT}(\ssat_s) = \mathsf{SAT}(\ssat_{s-1})$,
$\vec{a}^\uparrow_{s} = \vec{a}^\uparrow_{s-1}$,
$\vec{a}^\downarrow_{s} = \vec{a}^\downarrow_{s-1}$.

The adjunct set of Boolean constraints potentially expands at every iteration
of the algorithm by inclusion of further logic formulas called {\em
assertions\/} into its conjunction as the algorithm processes constraints
$\con{G}$.  Whether the set of Boolean constraints actually expands or not can
be determined by checking the satisfiability of $\mathsf{SAT}(\ssat_k)\neq
\mathsf{SAT}(\ssat_{k-1})$ for the current iteration $k$.

We argue below that if the original \cspkpn{} is satisfiable then so is
$\mathsf{SAT}(\ssat_s)$ and that the tuple of vectors
$(\vec{b_s}, \vec{a}^\uparrow_s[\vec{f}/\vec{b_s}], \vec{a}^\downarrow_s[\vec{f}/\vec{b_s}])$
is a solution to the former, where $\vec{b_s}$ is a solution of
$\mathsf{SAT}(\ssat_s)$.  In other words, the iterations terminate when the
conditional approximation limits the term variables, and when the adjunct SAT
constrains the flags enough to ensure the satisfaction of all \cspkpn{}
constraints.
In general, the set $\mathsf{SAT}(\ssat_s)$ can have more than one solution.
We select one of them, denoted by $\mathsf{SATSol}(\ssat_s)$ in the algorithm.

\begin{algorithm}[t]
    \caption{$\cspkpn(\g)$}
    \label{alg:cspkpn}
\begin{algorithmic}[1]
    \State$c \gets |\con{\g}|$
    \State$i \gets 0$
    \State$\ssat_{0} \gets \emptyset$
    \State$\vec{a}^\uparrow_{0} \gets (\none,\dots,\none)$
    \State$\vec{a}^\downarrow_{0} \gets (\nil,\dots,\nil)$
    \Repeat \label{alg:cspkpn-loop-start}
    \For{$1 \leq j \leq c:\ \constr{t_j}{t'_j} \in \con{\g}$}
        \State$(\ssat_{i \cdot c + j}, \vec{a}^\uparrow_{i \cdot c + j}, \vec{a}^\downarrow_{i \cdot c + j}) \gets
            \solve(\ssat_{i \cdot c + j-1}, \vec{a}^\uparrow_{i \cdot c + j-1}, \vec{a}^\downarrow_{i \cdot c + j-1}, \true, t_j, t'_j)$
    \EndFor
    \State$i \gets i + 1$
    \Until{$(\mathsf{SAT}(\ssat_{i \cdot c}), \vec{a}^\uparrow_{i \cdot c}, \vec{a}^\downarrow_{i \cdot c}) = (\mathsf{SAT}(\ssat_{(i-1) \cdot c}), \vec{a}^\uparrow_{(i-1) \cdot c}, \vec{a}^\downarrow_{(i-1) \cdot c})$}
    \If{$\ssat_{i \cdot c}$ is unsatisfiable}
        \State \Return$\unsat$
    \Else
        \State \Return$(\mathsf{SATSol}(\ssat_{i \cdot c}), \vec{a}^\uparrow_{i \cdot c}[\vec{f}/\vec{b}], \vec{a}^\downarrow_{i \cdot c}[\vec{f}/\vec{b}])$
    \EndIf
\end{algorithmic}
\end{algorithm}

\section{Algorithm}
\label{sec:algorithm}

In this section we present Algorithm~\ref{alg:cspkpn} which solves CSP-KPN for
a given KPN graph $\g=(\ver,\ed,\lab)$.  It performs the following steps.

The algorithm iterates over the set of constraints $\con{\g}$ and at each step
it builds a closer approximation of the solution.  The relation between two
consequent approximations satisfies formulas (\ref{eq:propterms}).

The function $\solve()$ solves the constraint $\constr{t_j}{t'_j}$ (see
equation~\eqref{eq:alg-equation} in Lemma~\ref{lemma-series}) and updates the
vectors
$\vec{a}^\uparrow_{i\cdot c + j}$ and $\vec{a}^\downarrow_{i \cdot c + j}$
with new values.  Furthermore, it adds Boolean assertions presented below that
ensure 1) satisfaction of the constraint for any
$\vec{b} \in \mathsf{SAT}(\ssat_{i \cdot c + j})$ as provided by
Definition~\ref{def:seniority}; and 2) well-formedness of the terms occurring
in it.

The algorithm terminates if $\ssat_{i \cdot c} \equiv \ssat_{(i - 1) \cdot c}$,
$\vec{a}^\uparrow_{i \cdot c}= \vec{a}^\uparrow_{(i - 1) \cdot c}$ and
$\vec{a}^\downarrow_{i \cdot c}= \vec{a}^\downarrow_{(i - 1) \cdot c}$.\\*[1em]
{\bf Well-formedness assertions for records and choices.} Any pair of elements
in a well-formed record/choice cannot have equal labels.  Therefore, for each
record $\record{\el{l_1}{b_1}{t_1},\dots,\el{l_1}{b_n}{t_n}}{}$ and each choice
$\choice{\el{l_1}{b_1}{t_1},\dots,\el{l_1}{b_n}{t_n}}{}$ occurring anywhere in
$\con{\g}$ the following assertion is added to the SAT\@:
$$\bigwedge_{\forall 1 \leq i, j \leq n\colon l_i = l_j} \lnot (b_i \wedge b_j)\text{.}$$
{\bf Well-formedness assertions for switches.} A well-formed switch term must
have exactly one positive guard.  Hence, for each switch
$\switch{\el{}{b_1}{t_1},\dots,\el{}{b_n}{t_n}}$ occurring anywhere in
$\con{\g}$ the following assertion is added to the SAT\@:
$$(b_1 \vee \dots \vee b_n) \wedge \bigwedge_{\forall 1 \leq i, j \leq n\colon i \neq j} \lnot (b_i \wedge b_j)\text{.}$$
{\bf Order assertions.} We generate two kinds of order assertions.
\begin{enumerate}
    \item If a variable $\var{x}$ is junior to two
        incommensurable, identically guarded terms
        $\var{x} \rel \switch{\el{}\ldots b:t_1\ldots}$ and
        $\var{x} \rel \switch{\el{}\ldots b:t_2\ldots}$, where neither
        $t_1 \rel t_2$ nor $t_2 \rel t_1$, the assertion $\lnot b$ is added to
        the adjunct SAT\@.

    \item For each $c\in\con{\g}$ of the form
        $\constr{
            \switch{\el{}{b_1}{t_1},\dots,\el{}{b_n}{t_n}}}
            {\switch{\el{}{b'_1}{t'_1},\dots,\el{}{b'_m}{t'_m}}}\text{,}
        $
        the assertion $\lnot (b_i \land b'_j)$ is added to the adjunct SAT
        if $t_i \not \rel t'_j$.
\end{enumerate}

Further details are found in Appendix~\ref{sec:appendix}.
\begin{lemma}[Loop invariant]
\label{lemma-series}
Algorithm~\ref{alg:cspkpn} finds a series of approximations in the form of
$$
    (\ssat_{k_0}, \vec{a}^\uparrow_{k_0}, \vec{a}^\downarrow_{k_0}), \ldots,
    (\ssat_{k_{s-1}}, \vec{a}^\uparrow_{k_{s-1}}, \vec{a}^\downarrow_{k_{s-1}}),
    (\ssat_{k_s}, \vec{a}^\uparrow_{k_s}, \vec{a}^\downarrow_{k_s})\text{,}
$$
where $k_{i} = i \cdot |\con{\g}|$, and such that the following holds for any
$\vec{b}_{k_{i}}\in \mathsf{SAT}(\ssat_{k_{i}})$.
\begin{enumerate}
    \item $\ssat_{k_{i}} \supseteq \ssat_{k_{i-1}}$;
    \item $\vec{a}^\uparrow_{k_{i-1}}[\vec{f}/\vec{b}_{k_{i}}] \rel
        \vec{a}^\uparrow_{k_{i}}[\vec{f}/\vec{b}_{k_{i}}]$ and
        $\vec{a}^\downarrow_{k_{i}}[\vec{f}/\vec{b}_{k_{i}}] \rel
        \vec{a}^\downarrow_{k_{i-1}}[\vec{f}/\vec{b}_{k_{i}}]$;
    \item $\nexists\ (\vec{a}', \vec{a}'')$:
        \begin{enumerate}
            \item
                $\vec{a}^\uparrow_{k_{i-1}}[\vec{f}/\vec{b}_{k_{i}}] \rel
                \vec{a}'[\vec{f}/\vec{b}_{k_{i}}]$,
                $\vec{a}'[\vec{f}/\vec{b}_{k_{i}}] \rel
                \vec{a}^\uparrow_{k_{i}}[\vec{f}/\vec{b}_{k_{i}}]$;
            \item
                $\vec{a}^\downarrow_{k_{i}}[\vec{f}/\vec{b}_{k_{i}}] \rel
                \vec{a}''[\vec{f}/\vec{b}_{k_{i}}]$,
                $\vec{a}''[\vec{f}/\vec{b}_{k_{i}}] \rel
                \vec{a}^\downarrow_{k_{i-1}}[\vec{f}/\vec{b}_{k_{i}}]$.
        \end{enumerate}
\end{enumerate}
\end{lemma}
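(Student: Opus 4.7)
The plan is to prove the three invariants simultaneously by induction on the outer loop counter $i$. The base case is handled by the initialisation: $\ssat_0=\emptyset$, $\vec{a}^\uparrow_0=(\none,\dots,\none)$, $\vec{a}^\downarrow_0=(\nil,\dots,\nil)$. By Definition~\ref{def:seniority} and Proposition~\ref{cor:semilattice}, $\none$ is the bottom of the semilattice of choice terms and $\nil$ is the top of the semilattice of the other terms, so property~(2) holds vacuously at $i=0$, property~(1) is trivially $\emptyset\subseteq\ssat_{k_1}$, and property~(3), read as non-existence of a strictly intermediate approximation, holds because nothing can sit strictly below $\none$ in the choice semilattice nor strictly above $\nil$ in the non-choice semilattice.

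For the inductive step I would split each outer pass into its $c=|\con{\g}|$ inner calls to $\solve$, indexed by $k_{i-1}+j$ for $1\le j\le c$, prove a single-step version of the invariant for one such call, and then chain the $c$ calls to recover the stated granularity at $k_i$. Property~(1) is immediate from an inspection of the Solve cases: the well-formedness assertions for records, choices and switches, together with the two order assertions at the end of Section~\ref{sec:algorithm}, are only ever conjoined to the previous $\ssat$, never removed. Property~(2) follows by case analysis on the constructor shapes of $t_j$ and $t'_j$: each $\solve$ case either leaves $\vec{a}^\uparrow$ and $\vec{a}^\downarrow$ unchanged, or replaces one $\upvar v$ by the join of its old value with a term synthesised from $t'_j$, or replaces one $\downvar v$ by the meet of its old value with a term synthesised from $t_j$. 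Proposition~\ref{cor:semilattice} guarantees these joins and meets exist in the correct semilattice, and substitution by any $\vec{b}_{k_i}\in\mathsf{SAT}(\ssat_{k_i})$ commutes with join/meet on well-formed terms once every switch guard is resolved, yielding the required seniority chain.

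Property~(3) is the main obstacle and the reason the adjunct SAT is needed. The strategy is to show that the replacement performed by each $\solve$ case is exactly the least upper bound (for up-coerced variables) or greatest lower bound (for down-coerced variables) in the relevant semilattice with respect to the constraint currently being processed; by the very definition of join and meet, no strictly intermediate conditional approximation can then exist. The delicate sub-case is $\solveSwitch$ combined with $\solveChoiceChoice$, where the tightening can be realised by an additional order assertion of the form $\lnot b$ or $\lnot(b_i\land b'_j)$ rather than by a direct structural step; here one must argue that the Boolean clauses added to $\ssat_{k_i}$ are precisely the clauses that forbid intermediate candidates under every $\vec{b}_{k_i}$ satisfying the enriched SAT, so that after substitution minimality is preserved. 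A careful enumeration of the appendix cases $\solveUpVar$, $\solveDownVar$, $\solveTupleTuple$, $\solveListList$, $\solveRecordRecord$, $\solveChoiceChoice$ and $\solveSwitch$, combined with the straightforward parts~(1) and~(2), completes the induction.
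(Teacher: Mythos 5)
Your overall structure---induction on the outer iteration counter, decomposition of each pass into the $c=|\con{\g}|$ inner calls to $\solve$, and separate treatment of the three clauses---is the same as the paper's, and your handling of clauses~(1) and~(2) is essentially theirs: the paper settles~(1) with the same ``assertions are only ever added'' observation and disposes of~(2) in a single sentence about local approximations converging to the global one, of which your constructor-by-constructor join/meet analysis is a fleshed-out version. Where you genuinely diverge is clause~(3). The paper argues by contradiction from the equation
$t[\vec{f}/\vec{b}_{r-1},\upvar{\vec{v}}/\vec{a}^\uparrow_{r-1},\downvar{\vec{v}}/\vec{a}^\downarrow_{r}] = t'[\vec{f}/\vec{b}_{r-1},\upvar{\vec{v}}/\vec{a}^\uparrow_{r},\downvar{\vec{v}}/\vec{a}^\downarrow_{r-1}]$
that $\solve$ is said to solve: a second solution $(\vec{a}',\vec{a}'')$ would produce two equal ground terms, equal ground terms are identical, hence the solution is unique and nothing can lie strictly between consecutive approximations. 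You instead argue \emph{extremality}: each update is the least upper bound (up-coerced variables) or greatest lower bound (down-coerced variables) compatible with the constraint being processed, so the non-existence of a strictly intermediate approximation follows from the defining property of join and meet in Proposition~\ref{cor:semilattice}. Your route has the merit of explicitly confronting the case where the tightening is realised only in the adjunct SAT via the $\lnot b$ and $\lnot(b_i\land b'_j)$ order assertions---a case the paper's uniqueness argument passes over in silence---at the cost of requiring a branch-by-branch verification that every $\solve$ case really computes an extremal element; the paper's route avoids that enumeration but leans on an equation-solving view of $\solve$ that is never made precise. Both remain proof sketches at a comparable level of rigour, so I would accept yours as an alternative, arguably more lattice-theoretically honest, presentation.
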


\begin{proof}
Let $c = |\con{\g}|$. To construct $(\ssat_{k_i}, \vec{a}^\uparrow_{k_i},
\vec{a}^\downarrow_{k_i})$ given $(\ssat_{k_{i-1}}, \vec{a}^\uparrow_{k_{i-1}},
\vec{a}^\downarrow_{k_{i-1}})$, the algorithm iteratively calls  $\solve()$
function (see Appendix~\ref{sec:appendix}).
$$
    (\ssat_{r}, \vec{a}^\uparrow_{r}, \vec{a}^\downarrow_{r}) \gets \solve(\ssat_{r-1}, \vec{a}^\uparrow_{r-1}, \vec{a}^\downarrow_{r-1}, \true, t_j, t'_j)
$$
where $r = k_{i-1}+j$, $1 \leq j \leq c$ and $k_i = k_{i-1} + c$.  For any $t_j
\rel t'_j$, $\solve()$ constructs the Boolean constraints $\ssat_{r}$, and
finds $\vec{a}^\uparrow_{r}$ and $\vec{a}^\downarrow_{r}$ by solving the
equation
\begin{equation}
    \label{eq:alg-equation}
    t[\vec{f}/\vec{b}_{r-1},\upvar{\vec{v}}/\vec{a}^\uparrow_{r-1},\downvar{\vec{v}}/\vec{a}^\downarrow_{r}] = t'[\vec{f}/\vec{b}_{r-1},\upvar{\vec{v}}/\vec{a}^\uparrow_{r},\downvar{\vec{v}}/\vec{a}^\downarrow_{r-1}]\text{,}
\end{equation}

\begin{enumerate}
    \item $\ssat_{k_{i}} \supseteq \ssat_{k_{i-1}}$ by construction:
        $\solve()$ only adds new Boolean constraints to the existing
        set.
    \item $\solve()$ iteratively constructs the local approximation
        for each constraint. The series of local approximations converges to
        the global approximation.
    \item Proof by contradiction. Assume that $(\vec{a}^\uparrow_{r},
        \vec{a}^\downarrow_{r})$ is a solution of \eqref{eq:alg-equation} and
        there exists another solution $(\vec{a}', \vec{a}'')$, such that
        $\vec{a}' \neq \vec{a}^\uparrow_{r}$ and $\vec{a}'' \neq
        \vec{a}^\downarrow_{r}$. Then
        $$
        t[\vec{f}/\vec{b}_{r-1},\upvar{\vec{v}}/\vec{a}^\uparrow_{r-1},\downvar{\vec{v}}/\vec{a}''] = t'[\vec{f}/\vec{b}_{r-1},\upvar{\vec{v}}/\vec{a}',\downvar{\vec{v}}/\vec{a}^\downarrow_{r-1}]\text{.}
        $$
        Two ground terms are equal only if they represent the same term, and,
        therefore, $\vec{a}' = \vec{a}^\uparrow_{r}$ and $\vec{a}''
        = \vec{a}^\downarrow_{r}$, which contradicts the initial assumption.\qed
\end{enumerate}
\end{proof}

\begin{theorem}[Termination]\label{thm:termination}
    $\cspkpn(\g)$ terminates after a finite number of steps for any KPN graph
    $\g$.
\end{theorem}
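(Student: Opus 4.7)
The plan is to combine Lemma~\ref{lemma-series} with a finiteness argument: the two conditional approximations form monotone chains in the seniority order, and the adjunct SAT forms a monotone chain of Boolean constraint sets; since all three live in finite spaces, the iteration must stabilize.

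First I would bound the set of terms that can appear as components of $\vec{a}^\uparrow_k$ and $\vec{a}^\downarrow_k$. The \solve() function only builds new term values out of the syntactic material already present in $\con{\g}$ (symbols, labels, constructors, nested subterms, and the fixed pool of flags $\bvars(\g)$); it never invents new symbols or labels. Consequently every approximation lives in a fixed finite subset $\mathcal{T}^\star$ of the term universe — namely, the closure of the subterms of $\con{\g}$ under the constructors used by \solve() with guards drawn from expressions over $\bvars(\g)$. Restricted to $\mathcal{T}^\star$, the seniority semilattices of Proposition~\ref{cor:semilattice} are finite, hence both ascending and descending chains have bounded length.

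Next I would use Lemma~\ref{lemma-series} directly. Part~(2) of that lemma gives us, for any $\vec{b}_{k_i}\in\mathsf{SAT}(\ssat_{k_i})$, a non-strict ascending chain for the up-variables and a non-strict descending chain for the down-variables, after the flags are instantiated. Combined with the finiteness of $\mathcal{T}^\star$, these chains can strictly extend only finitely many times: there exists some $N_1$ such that for all $i\ge N_1$ the approximations $\vec{a}^\uparrow_{k_i}$ and $\vec{a}^\downarrow_{k_i}$ (up to flag-substitution equivalence) no longer change. For the SAT component, Part~(1) gives $\ssat_{k_i}\supseteq\ssat_{k_{i-1}}$, so $\mathsf{SAT}(\ssat_{k_i})$ is a non-increasing chain of subsets of the finite set $\{0,1\}^{|\bvars(\g)|}$; hence it stabilizes after some $N_2$ iterations. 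Taking $N=\max(N_1,N_2)$, the loop's termination test in Algorithm~\ref{alg:cspkpn} is satisfied at iteration $N+1$ at the latest, and the algorithm exits.

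The main obstacle I anticipate is Step~1: rigorously defining the finite space $\mathcal{T}^\star$ in which approximations live, and showing that \solve() never escapes it. One must inspect each case of \solve() (the handlers \solveUpVar, \solveDownVar, \solveTupleTuple, \solveRecordRecord, \solveChoiceChoice, \solveSwitch, etc.\ referenced in Appendix~\ref{sec:appendix}) and verify that every term it writes into $\vec{a}^\uparrow$ or $\vec{a}^\downarrow$ is assembled from syntactic material already in $\con{\g}$, with new guards built as Boolean combinations over the \emph{fixed} set $\bvars(\g)$ together with those flags introduced during the processing of $\con{\g}$ (which itself must be argued to be finite per run). Once this structural invariant is in place, the finite-chain argument applied to the two semilattices and to the Boolean-solution lattice yields termination immediately.
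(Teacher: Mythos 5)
Your proposal is correct and follows essentially the same route as the paper's own proof: both arguments bound the space of reachable terms by observing that \solve{} never introduces new symbols, labels or variables, then combine the monotone approximation chains of Lemma~\ref{lemma-series} with the non-increasing chain of SAT solution sets over the finite set $\{0,1\}^{|\bvars(\g)|}$ to conclude stabilization. Your explicit construction of the finite closure $\mathcal{T}^\star$ is a somewhat more rigorous packaging of what the paper states informally, but it is not a different argument.
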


\begin{proof}
For a given graph $\g$ the number of flags, variables and labels for records
and choices is bounded. There are two ways to produce new terms: either to add
entries with new labels to records and choices, or to substitute subterms for
terms.
\begin{enumerate}
    \item The number of new terms constructed by adding new entries is bounded
        because the number of labels in a given $\g$ is finite.
    \item The number of terms constructed by substituting subterms for other
        terms is bounded because
        \begin{inparaenum}[a)]
            \item the number of variables is finite (the algorithm does not
                generate new variables);
            \item after the variables have been instantiated, the category of
                the term cannot be changed, otherwise, the seniority relation
                would be violated.
        \end{inparaenum}
\end{enumerate}
It implies that for each $\upvar{v} \in \utvars(\g)$ there exists a ground term
$\bar{t}$, such that $\upvar{v} \rel \bar{t}$, and for each $\downvar{v} \in
\dtvars(\g)$ there exists a ground term $\underline{t}$, such that
$ \underline{t} \rel \downvar{v}$.
Providing that $|\mathsf{SAT}(\ssat_{k_i})| \leq
|\mathsf{SAT}(\ssat_{k_{i-1}})|$, the algorithm terminates after a finite
number of steps.\qed
\end{proof}

\begin{theorem}
    Assume a KPN graph $\g=(\ver, \ed, \lab)$. The set of constraints $\con{\g}$
    is inconsistent if and only if $\cspkpn(\g)$ returns $\unsat$.
\end{theorem}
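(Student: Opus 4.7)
The plan is to prove the two implications separately. By Theorem~\ref{thm:termination} the algorithm reaches a fixed-point triple $(\ssat_{i \cdot c}, \vec{a}^\uparrow_{i \cdot c}, \vec{a}^\downarrow_{i \cdot c})$ after finitely many outer iterations, and then either returns $\unsat$ or the concrete candidate $(\mathsf{SATSol}(\ssat_{i \cdot c}), \vec{a}^\uparrow_{i \cdot c}[\vec{f}/\vec{b}], \vec{a}^\downarrow_{i \cdot c}[\vec{f}/\vec{b}])$, so both directions of the biconditional refer to this terminal configuration.

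For the ``if'' direction ($\cspkpn(\g)$ returns $\unsat$ implies $\con{\g}$ inconsistent), I would proceed by contrapositive: assuming a CSP solution $(\vec{b}, \vec{t}, \vec{t}')$ exists, I would show that the flag vector $\vec{b}$ satisfies every assertion that $\solve$ ever adds to $\ssat$. The label-uniqueness assertions for records and choices, and the exactly-one-true-guard assertions for switches, follow directly from the well-formedness clauses in Section~\ref{sec:mdl}, which any CSP solution must obey. The order assertions of both kinds are justified by Definition~\ref{def:seniority} together with Proposition~\ref{cor:semilattice}: if two terms are incommensurable in the seniority lattice then no single term can be junior to both, so any guard that would force such a configuration must evaluate to $\false$. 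Hence $\vec{b}$ satisfies $\ssat_{i \cdot c}$, contradicting the $\unsat$ output.

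For the ``only if'' direction I would also argue by contrapositive. Suppose the algorithm does not return $\unsat$, so $\ssat_{i \cdot c}$ admits some $\vec{b} = \mathsf{SATSol}(\ssat_{i \cdot c})$. I claim that the triple $(\vec{b}, \vec{a}^\uparrow_{i \cdot c}[\vec{f}/\vec{b}], \vec{a}^\downarrow_{i \cdot c}[\vec{f}/\vec{b}])$ solves the \cspkpn{}. Well-formedness of the instantiated terms follows because $\vec{b}$ satisfies the label-uniqueness and switch assertions emitted by $\solve$. For each constraint $\constr{t_j}{t'_j} \in \con{\g}$, stability of the outer loop at iteration $i$, combined with clause~(3) of Lemma~\ref{lemma-series} asserting that the approximations $\vec{a}^\uparrow$ and $\vec{a}^\downarrow$ cannot be tightened further, means that applying $\solve$ to $\constr{t_j}{t'_j}$ under $\vec{b}$ produces no change and the local equation~\eqref{eq:alg-equation} already holds. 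Unpacking this equation structurally and appealing to Definition~\ref{def:seniority} case by case --- tuples, records, choices, switches, up- and down-coerced variables --- then yields $t_j \rel t'_j$ under the proposed substitutions, exhibiting a CSP solution and therefore the consistency of $\con{\g}$.

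The main obstacle lies in this second direction. The adjunct SAT only records necessary conditions on flags, while the approximations $\vec{a}^\uparrow$, $\vec{a}^\downarrow$ only bound the variables from below and above respectively; it is not immediately clear that these bounds combined with the flag constraints actually reconstruct the seniority relation demanded by the \cspkpn{}. Bridging this gap requires a careful induction mirroring the case analysis of $\solve$ in Appendix~\ref{sec:appendix}, paying particular attention to the contravariance of choices versus the covariance of records, and to the interaction between switch guards and the order assertions of type~(2) that rule out incompatible branch pairs in the SAT.
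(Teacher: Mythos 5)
Your decomposition is genuinely different from the paper's. The paper disposes of this theorem in one short paragraph: starting from the weakest approximation $(\emptyset, (\none,\dots,\none), (\nil,\dots,\nil))$, it invokes clause~(3) of Lemma~\ref{lemma-series} to claim that the successive approximations form a maximal chain (nothing lies strictly between consecutive approximations), concludes that the algorithm ``cannot skip a solution if one exists'', and combines this with Theorem~\ref{thm:termination} to assert both directions at once. You instead split the biconditional and argue each half by a soundness analysis of the emitted assertions and of the returned candidate. Your route is more informative in principle, but as written it is incomplete in one place and wrong in another.

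The acknowledged gap in your ``only if'' direction is the actual crux of the theorem: you never show that the terminal triple $(\vec{b}, \vec{a}^\uparrow_{i\cdot c}[\vec{f}/\vec{b}], \vec{a}^\downarrow_{i\cdot c}[\vec{f}/\vec{b}])$ satisfies every constraint in $\con{\g}$, and without that, ``no $\unsat$ returned'' does not imply consistency. (The paper does not close this gap either --- the corresponding claim is stated without proof in Section~\ref{sec:adjunct-sat} --- but flagging an obstacle is not the same as overcoming it.) Separately, your justification of the order assertions in the ``if'' direction rests on a false claim: it is not true that no term can be junior to two incommensurable terms. By Proposition~\ref{cor:semilattice} the down-coerced terms form a lower semilattice, so the incommensurable records $\record{\el{a}{}{int}}{}$ and $\record{\el{b}{}{string}}{}$ have the common junior $\record{\el{a}{}{int},\el{b}{}{string}}{}$. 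Consequently an order assertion of type~(1) can, on its face, exclude flag vectors belonging to genuine CSP solutions, and your contrapositive argument that every CSP solution satisfies $\ssat_{i\cdot c}$ does not go through as stated; you would need to argue why the specific assertions the algorithm emits still admit at least one solution whenever the CSP is satisfiable --- which is precisely the work the paper's maximal-chain argument via Lemma~\ref{lemma-series} is intended to do.
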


\begin{proof}
As the initial approximation the algorithm selects the weakest approximation
$(\emptyset, (\none,\dots,\none), (\nil,\dots,\nil))$,
it follows from Lemma~\ref{lemma-series} that the algorithm iterates over all
possible approximations in consecutive order starting from
$(\ssat_{k_0}, \vec{a}^\uparrow_{k_0}, \vec{a}^\downarrow_{k_0})$. Therefore,
the algorithm cannot skip a solution if one exists.  By
Theorem~\ref{thm:termination}  the algorithm terminates after a finite number
of steps.  Hence, it returns $\unsat$ only if and only if the set of
constraints $\con{\g}$ cannot be satisfied.\qed
\end{proof}

\section{Communication Protocol}
\label{sec:protocol}

In this section we demonstrate interfaces with flow inheritance and code
customisation using the example from Section~\ref{sec:example}.  The interfaces
are defined as choice-of-records terms.  Labels in the choice term of the input
interface correspond to function names that can process messages tagged with
corresponding labels.  Output messages are produced by calling special
functions called {\em salvos}.  The name of a salvo corresponds to one of the
labels in the output choice term.  The compatibility of two components
connected by a channel is defined by the seniority relation.

Consider the source code and the interface of the component \verb|read| in
\Fig{fig:example-demo}.  Integers that have been added as prefixes to functions
in the code specify the channels that messages are received from and sent to.
In the interfaces we use prefixes \verb|$^| and \verb|$_| before identifiers to
denote up- and down-coerced variables, respectively.

\begin{figure}[t]
\centering
\begin{subfigure}[b]{\textwidth}
\begin{verbatim}
message _1_init(vector<vector<double> img);
message _2_error(string msg);
variant _1_read_color(string fname) { _1_init(...); ... _2_error(...); }
variant _1_read_grayscale(string fname) {...}
variant _1_read_unchanged(string fname) {...}
\end{verbatim}
\caption{The source code}
\end{subfigure}
\begin{subfigure}[b]{\textwidth}
\begin{verbatim}
IN
  1: (: read_color(c): {fname: string | $_rc},
        read_grayscale(g): {fname: string | $_rg},
        read_unchanged(u): {fname: string | $_ru} | $^r :)
OUT
    1: (: init(or c g u): {img: vector<vector<double>>, | $_ro1 } | $^r :)
    2: (: error(or c g u): {msg: string | $_ro2 } :)
$_rc <= $_ro1; $_rg <= $_ro1; $_ru <= $_ro1;
$_rc <= $_ro2; $_rg <= $_ro2; $_ru <= $_ro2;
\end{verbatim}
\caption{The interface}
\end{subfigure}
\caption{The source code and the interface of the component \texttt{read} of
the image processing algorithm}
\label{fig:example-demo}
\end{figure}

A tail variable \verb|$^r| in the interface enables flow inheritance for
choices: variants from the input channel that cannot be processed by the
component (i.e.\ all variants but \verb|read_color|, \verb|read_grayscale| or
\verb|read_unchanged|), are absorbed by \verb|$^r|.  Thus, the messages of type
$M_{r1}$ that contain the name of the image file are processed by the component
and the messages of type $M_{i1}$ are inherited to the output and forwarded
directly to the component \verb|init|.

Flow inheritance for records is realised by down-coerced variables \verb|$_rc|,
\verb|$_rg|, \verb|$_ru|, \verb|$_ro1|, \verb|$_ro2|, and a set of auxiliary
constraints.  A record in the input message contains an entry with the label
\verb|K|, which a processing function does not expect.  After solving the CSP,
the entry is added to the tail variable \verb|$_ro1|, because the solver
deduces that the element with the label \verb|K| is required by the component
\verb|init|.

Furthermore, we use flags \verb|c|, \verb|g| and \verb|u| to exclude the code
that is not used in the context.  The guards in the output interface employ the
joint set of flags from the input variants that can fire salvos specified in
the output interface.  In the example all three functions can fire \verb|init|
and \verb|error| salvos; accordingly, the salvos' guards are $c \lor g \lor u$.
The solver deduces that that the variants \verb|read_grayscale| and
\verb|read_unchanged| cannot receive any messages, and, therefore, their
respective processing functions can be excluded from the code.

To facilitate decontextualisation we introduce a wrapper for every component
called a {\em shell\/}: an auxiliary configuration file that provides
facilities for renaming labels in output records and choices and changing the
routing of output messages.

The source code and the interfaces for the other two components are available in
the repository~\cite{repository-example}.

\section{Implementation}
\label{sec:implementation}

We implemented the solver~\cite{repository} for the $\cspkpn{}$ in
\verb|OCaml|.  It works on top of the PicoSAT~\cite{picosat} library, the
latter used as a subsolver dealing with Boolean assertions.  The input for the
solver is a set of constraints and the output is in the form of assignments to
flags and term variables.

We also developed a toolchain in \verb|C++| and \verb|OCaml| that performs the
interface reconciliation in five steps:
\begin{enumerate}
    \item Given a set of \verb|C++| sources (the components),
        augment them with macros acting as placeholders
        for the code that enables flow-inheritance.
    \item Derive the interfaces from the code.
    \item Given the interfaces and a netlist that specifies a KPN graph,
        construct the constraints to be passed on to the \cspkpn{}
        algorithm.
    \item Run the solver.
    \item Based on the solution, generate header files for every component with
        macro definitions.  In addition, the tool generates
        the API functions to be called when a component sends
        or receives a message.
\end{enumerate}

Advantages of the presented design are the following:
\begin{itemize}
    \item Interfaces and the code behind them can be generic as long as they
        are sufficiently configurable. No communication between component
        designers is necessary to ensure consistency in the design.
    \item Configuration and compilation of every component is separated from
        the rest of the application.  This prevents source code leaks in
        proprietary software running in the Cloud.\footnote{which is otherwise
        a serious problem. For example, proprietary \texttt{C++} libraries that
        use templates cannot be distributed in binary form due to restrictions
        of the language's static specialisation mechanism.}
\end{itemize}

\section{Conclusion and Future Work}
\label{sec:conclusion}

We have presented a new static mechanism for coordinating component interfaces
based on CSP and SAT that checks compatibility of component interfaces
connected in a network with support of overloading and structural subtyping.
We developed a fully decoupled Message Definition Language that can be used in
the context of KPN for coordinating components written in any programming
language.  We defined the interface of  \verb|C++| components to demonstrate
the binding between the MDL and message processing functions.  Our techniques
support genericity, inheritance and structural subtyping, thanks to the order
relation defined on MDL terms.

On the theory side, we presented the CSP solution algorithm, showed its
correctness and identified the termination condition.  Although we assume that
the algorithm is NP-complete because of the SAT problem, which needs to be
solved as a subproblem, the complexity of the algorithm will be evaluated in
further research.

The next step will be to support multiple flow inheritance in the MDL, to
enable combined structures with inheritance (for example, $\tuple{union\
\downvar{a}\ \downvar{b}}$ represents a record that contains a union of entries
associated with records $\downvar{a}$ and $\downvar{b}$).  This would allow one
to design components that perform synchronisation and merge multiple messages
into one while preserving the inheritance mechanism of a vertex.

In the context of Cloud, our results may prove useful to the
software-as-service community since we can support much more generic interfaces
than are currently available without exposing the source code of proprietary
software behind them. Building KPNs the way we do could enable service
providers to configure a solution for a network customer based on components
that they have at their disposal as well as those provided by other providers
and the customer themselves, all solely on the basis of interface definitions
and automatic tuning to nonlocal requirements.

\bibliographystyle{splncs}
\bibliography{bibliography}

\newpage
\appendix
\section{Appendix: $\solve$ function}
\label{sec:appendix}

\begin{algorithm}[H]
    \caption{$\solve(\ssat_i, \vec{a}^\uparrow_i, \vec{a}^\downarrow_i, b, t, t')$}
\label{alg:solve}
\begin{algorithmic}[1]
    \State$\hat{\ssat}_i \gets \assertWellFormed(\assertWellFormed(\ssat_i, b, t), b, t')$
    \If{(($t = \none$ or $t' = \nil$) and ($t \neq \none$ or $t' \neq \nil$))\\\qquad or ($t$ and $t'$ are ground, and $t = t'$)}
        \State\Return$(\hat{\ssat}_i, \vec{a}^\uparrow_i, \vec{a}^\downarrow_i)$
    \ElsIf{$t = \upvar{v}$ and $t' = \upvar{v'}$}
        \State$\ssat_{i+1}, \vec{a}^\uparrow_{i+1} \gets \text{set a new approximation for $\upvar{v'}$ equal to the one of $\upvar{v}$}$
    \ElsIf{$t = \downvar{v}$ and $t' = \downvar{v'}$}
        \State$\ssat_{i+1}, \vec{a}^\downarrow_{i+1} \gets\text{set a new approximation for $\downvar{v}$ equal to the one of $\downvar{v'}$}$
        \State\Return$(\ssat_{i+1}, \vec{a}^\uparrow_i, \vec{a}^\downarrow_{i+1})$
    \ElsIf{$t$ is $\nil$, symbol, tuple or record, and $t' = \downvar{v'}$}
        \State\Return$\solve(\hat{\ssat}_i, \vec{a}^\uparrow_i, \vec{a}^\downarrow_i, b, t, t'[\upvar{\vec{v}}/\vec{a}^\uparrow_i,\downvar{\vec{v}}/\vec{a}^\downarrow_i])$
    \ElsIf{$t$ is a choice and $t' = \upvar{v'}$}
        \State$\ssat_{i+1}, \vec{a}^\uparrow_{i+1} \gets \text{set a new approximation for $\upvar{v'}$ as $t[\upvar{\vec{v}}/\vec{a}^\uparrow_i,\downvar{\vec{v}}/\vec{a}^\downarrow_i]$ when $b$}$
        \State\Return$(\ssat_{i+1}, \vec{a}^\uparrow_{i+1}, \vec{a}^\downarrow_i)$
    \ElsIf{$t = \downvar{v}$, and $t'$ is $\nil$, symbol, tuple or record}
        \State$\ssat_{i+1}, \vec{a}^\downarrow_{i+1} \gets \text{set a new approximation for $\downvar{v}$ as $t'[\upvar{\vec{v}}/\vec{a}^\uparrow_i,\downvar{\vec{v}}/\vec{a}^\downarrow_i]$ when $b$}$
        \State\Return$(\ssat_{i+1}, \vec{a}^\downarrow_{i}, \vec{a}^\downarrow_{i+1})$
    \ElsIf{$t = \upvar{v}$ and $t'$ is a choice}
        \State\Return$\solve(\hat{\ssat}_i, \vec{a}^\uparrow_i, \vec{a}^\downarrow_i, b, t[\upvar{\vec{v}}/\vec{a}^\uparrow_i,\downvar{\vec{v}}/\vec{a}^\downarrow_i], t')$
    \ElsIf{$t$ and $t'$ are tuples}
        \State\Return$\solveTupleTuple(\hat{\ssat}_i, \vec{a}^\uparrow_i, \vec{a}^\downarrow_i, b, t, t')$
    \ElsIf{$t = \nil$ and $t'$ is a record}
        \State\Return$\solveNilRecord(\hat{\ssat}_i, \vec{a}^\uparrow_i, \vec{a}^\downarrow_i, b, t')$
    \ElsIf{$t$ and $t'$ are records}
        \State\Return$\solveRecordRecord(\hat{\ssat}_i, \vec{a}^\uparrow_i, \vec{a}^\downarrow_i, b, t, t')$
    \ElsIf{$t$ and $t'$ are choices}
        \State\Return$\solveChoiceChoice(\hat{\ssat}_i, \vec{a}^\uparrow_i, \vec{a}^\downarrow_i, b, t, t')$
    \ElsIf{$t$ or $t'$ is a switch}
        \State\Return$\solveSwitch(\hat{\ssat}_i, \vec{a}^\uparrow_i, \vec{a}^\downarrow_i, b, t, t')$
    \Else
        \State\Return$(\hat{\ssat}_i \cup \{\lnot b\}, \vec{a}^\uparrow_i, \vec{a}^\downarrow_i)$
    \EndIf
\end{algorithmic}
\end{algorithm}

\begin{figure}[t]
\begin{algorithm}[H]
\caption{$\assertWellFormed(\ssat_i, b, t)$}
\label{alg:assert-well-formed}
\begin{algorithmic}[1]
	\If{$t$ is a record $\record{\el{l_1}{b_1}{t_1},\dots,\el{l_n}{b_n}{t_n}}{}$ or $\choice{\el{l_1}{b_1}{t_1},\dots,\el{l_n}{b_n}{t_n}}{}$}
		\State$\ssat_{i+1} \gets \ssat_i \bigcup\limits_{\forall 1 \leq i, j \leq n\colon l_i = l_j} \{b \to \lnot (b_i \wedge b_j)\}$
	\ElsIf{$t$ is a switch \switch{\el{}{b_1}{t_1},\dots,\el{}{b_n}{t_n}}}
		\State$\ssat_{i+1} \gets \ssat_i \cup \{b_1 \lor \dots \lor b_n\} \bigcup\limits_{\forall 1 \leq i, j \leq n\colon i \neq j} \{b \to \lnot (b_i \wedge b_j)\}$
	\EndIf
	\State\Return$\ssat_{i+1}$
\end{algorithmic}
\end{algorithm}

\vspace{-4em}
\begin{algorithm}[H]
\caption{$\solveTupleTuple(\ssat_i, \vec{a}^\uparrow_i, \vec{a}^\downarrow_i, b, t, t')$}
\label{alg:solve-tuple-tuple}
\begin{algorithmic}[1]
    \State Let $t$ be of the form $\tuple{t_1\dots t_n}$
    \State $g \gets t'[\upvar{\vec{v}}/\vec{a}^\uparrow_i,\downvar{\vec{v}}/\vec{a}^\downarrow_i]$
    \If{$g = \tuple{t'_1 \dots t'_m}$ and $n  = m$}
        \For{$i\colon 1 \leq j \leq n$}
            \State$\ssat_{i+j}, \vec{a}^\uparrow_{i+j}, \vec{a}^\downarrow_{i+j} \gets \solve(\ssat_{i+j-1}, \vec{a}^\uparrow_{i+j-1}, \vec{a}^\downarrow_{i+j-1}, b, t_j, t'_j)$
        \EndFor
        \State\Return$(\ssat_{i+n}, \vec{a}^\uparrow_{i+n}, \vec{a}^\downarrow_{i+n})$
    \Else
        \State\Return$(\ssat_i \cup \{\lnot b\}, \vec{a}^\uparrow_{i+n}, \vec{a}^\downarrow_{i+n})$
    \EndIf
\end{algorithmic}
\end{algorithm}

\vspace{-4em}
\begin{algorithm}[H]
\label{alg:solve-nil-record}
\caption{$\solveNilRecord(\ssat_i, \vec{a}^\uparrow_i, \vec{a}^\downarrow_i, b, t')$}
\begin{algorithmic}[1]
    \State $g \gets t'[\upvar{\vec{v}}/\vec{a}^\uparrow_i,\downvar{\vec{v}}/\vec{a}^\downarrow_i]$
    \State Let $g$ be of the form $\record{\el{l'_1}{b'_1}{t'_1},\dots,\el{l'_m}{b'_m}{t'_m}}{}$
    \State\Return$(\ssat_i \cup \{b \to \bigwedge\limits_{j=1}^n \lnot b'_j\}, \vec{a}^\uparrow_{i}, \vec{a}^\downarrow_{i})$
\end{algorithmic}
\end{algorithm}

\vspace{-4em}
\begin{algorithm}[H]
\caption{$\solveRecordRecord(\ssat_i, \vec{a}^\uparrow_i, \vec{a}^\downarrow_i, b, t, t')$}
\label{alg:solve-record-record}
\begin{algorithmic}[1]
    \State $g \gets t'[\upvar{\vec{v}}/\vec{a}^\uparrow_i,\downvar{\vec{v}}/\vec{a}^\downarrow_i]$
    \If{$t = \record{\el{l_1}{b_1}{t_1},\dots,\el{l_n}{b_n}{t_n}}{}$}
        \For{$j\colon 1 \leq j \leq m$}
            \If{$\exists k\colon l_k \in t,\ l_k = l'_j$}
                \State$\ssat_{i+j}, \vec{a}^\uparrow_{i+j}, \vec{a}^\downarrow_{i+j} \gets \solve(\ssat_{i+j-1}, \vec{a}^\uparrow_{i+j-1}, \vec{a}^\downarrow_{i+j-1}, b \to b'_j \to b_k, t_k, t'_j)$
            \Else
                \State$\ssat_{i+j} \gets \ssat_{i+j-1} \cup \{b \to \lnot b'_j\}$
            \EndIf
        \EndFor
    \ElsIf{$t = \record{\el{l_1}{b_1}{t_1},\dots,\el{l_n}{b_n}{t_n}}{\downvar{v}}$}
        \For{$j\colon 1 \leq j \leq m$}
            \If{$\exists k\colon l_k \in t,\ l_k = l'_j$}
                \State$\ssat_{i+j}, \vec{a}^\uparrow_{i+j}, \vec{a}^\downarrow_{i+j} \gets \solve(\ssat_{i+j-1}, \vec{a}^\uparrow_{i+j-1}, \vec{a}^\downarrow_{i+j-1}, b \to b'_j \to b_k, t_k, t'_j)$
            \Else
                \State $\ssat_{i+j}, \vec{a}^\uparrow_{i+j}, \vec{a}^\downarrow_{i+j} \gets \text{set a new approximation for $\downvar{v}$ as $\record{\el{l'_j}{b'_j}{t'_j}}{}$ when $b$}$
            \EndIf
        \EndFor
    \EndIf
    \State\Return$(\ssat_{i+m}, \vec{a}^\uparrow_{i+m}, \vec{a}^\downarrow_{i+m})$
\end{algorithmic}
\end{algorithm}
\end{figure}

\begin{figure}[t]
\vspace{-8em}
\begin{algorithm}[H]
\caption{$\solveChoiceChoice(\ssat_i, \vec{a}^\uparrow_i, \vec{a}^\downarrow_i, b, t, t')$}
\label{alg:solve-choice-choice}
\begin{algorithmic}[1]
    \State $g \gets t[\upvar{\vec{v}}/\vec{a}^\uparrow_i,\downvar{\vec{v}}/\vec{a}^\downarrow_i]$
    \If{$t' = \choice{\el{l_1}{b_1}{t_1},\dots,\el{l_n}{b_n}{t_n}}{}$}
        \For{$j\colon 1 \leq j \leq m$}
            \If{$\exists k\colon l_k \in t,\ l'_k = l_j$}
                \State$\ssat_{i+j}, \vec{a}^\uparrow_{i+j}, \vec{a}^\downarrow_{i+j} \gets \solve(\ssat_{i+j-1}, \vec{a}^\uparrow_{i+j-1}, \vec{a}^\downarrow_{i+j-1}, b \to b_j \to b'_k, t_j, t'_k)$
            \Else
                \State$\ssat_{i+j} \gets \ssat_{i+j-1} \cup \{b \to \lnot b'_j\}$
            \EndIf
        \EndFor
    \ElsIf{$t' = \choice{\el{l_1}{b_1}{t_1},\dots,\el{l_n}{b_n}{t_n}}{\upvar{v}}$}
        \For{$j\colon 1 \leq j \leq m$}
            \If{$\exists k\colon l_k \in t,\ l_k = l'_j$}
                \State$\ssat_{i+j}, \vec{a}^\uparrow_{i+j}, \vec{a}^\downarrow_{i+j} \gets \solve(\ssat_{i+j-1}, \vec{a}^\uparrow_{i+j-1}, \vec{a}^\downarrow_{i+j-1}, b \to b_j \to b'_k, t_j, t'_k)$
            \Else
                \State $\ssat_{i+j}, \vec{a}^\uparrow_{i+j}, \vec{a}^\downarrow_{i+j} \gets \text{set a new approximation for $\upvar{v}$ as $\choice{\el{l_j}{b_j}{t_j}}{}$ when $b$}$
            \EndIf
        \EndFor
    \EndIf
        \State\Return$(\ssat_{i+m}, \vec{a}^\uparrow_{i+m}, \vec{a}^\downarrow_{i+m})$
\end{algorithmic}
\end{algorithm}

\vspace{-4em}
\begin{algorithm}[H]
\caption{$\solveSwitch(\ssat_i, \vec{a}^\uparrow_i, \vec{a}^\downarrow_i, b, t, t')$}
\label{alg:solve-switch}
\begin{algorithmic}[1]
    \If{$t = \switch{\el{}{b_1}{t_1},\dots,\el{}{b_n}{t_n}}$}
        \For{$j\colon 1 \leq i \leq n$}
            \State$\ssat_{i+j}, \vec{a}^\uparrow_{i+j}, \vec{a}^\downarrow_{i+j} \gets \solve(\ssat_{i+j-1}, \vec{a}^\uparrow_{i+j-1}, \vec{a}^\downarrow_{i+j-1}, b, t_j, t')$
        \EndFor
    \ElsIf{$t' = \switch{\el{}{b'_1}{t'_1},\dots,\el{}{b'_n}{t'_n}}$}
        \For{$i\colon 1 \leq j \leq n$}
            \State$\ssat_{i+j}, \vec{a}^\uparrow_{i+j}, \vec{a}^\downarrow_{i+j} \gets \solve(\ssat_{i+j-1}, \vec{a}^\uparrow_{i+j-1}, \vec{a}^\downarrow_{i+j-1}, b, t, t'_j)$
        \EndFor
    \EndIf
    \State\Return$(\ssat_{i+n}, \vec{a}^\uparrow_{i+n}, \vec{a}^\downarrow_{i+n})$
\end{algorithmic}
\end{algorithm}
\end{figure}

\end{document}